\newcommand{\change}[1]{\textcolor{black}{#1}}
\newcommand{\changenew}[1]{\textcolor{black}{#1}}
\newtheorem{assumption}{Assumption}
\definecolor{blue1}{HTML}{AED4E5}
\definecolor{blue2}{HTML}{5795C7}
\definecolor{blue3}{HTML}{1E4C9A}
\definecolor{color1}{HTML}{737373}
\definecolor{color2}{HTML}{077df0}
\definecolor{color3}{HTML}{ff2f4c}
\newcolumntype{P}[1]{>{\centering\arraybackslash}p{#1}}
\newcolumntype{R}[1]{>{\RaggedLeft\arraybackslash}p{#1}}
\newcolumntype{L}[1]{>{\RaggedRight\arraybackslash}p{#1}}
\begin{document}

\title{From Scaling to Structured Expressivity: Rethinking Transformers for CTR Prediction}





\author{Bencheng Yan$^{*}$}
\email{bencheng.ybc@alibaba-inc.com}
\affiliation{%
  \institution{Alibaba Group}
  \city{Beijing}
  \country{China}
}

\author{Yuejie Lei$^{*}$}
\email{leiyuejie.lyj@alibaba-inc.com}
\affiliation{%
  \institution{Alibaba Group}
  \city{Beijing}
  \country{China}
}

\author{Zhiyuan Zeng$^{*}$}
\email{zengzhiyuan.zzy@alibaba-inc.com}
\affiliation{%
  \institution{Alibaba Group}
  \city{Beijing}
  \country{China}
}

\author{Zheye Deng$^{*}$}
\email{dengzheye.dzy@alibaba-inc.com}
\affiliation{%
  \institution{Alibaba Group}
  \city{Beijing}
  \country{China}
}

\author{Di Wang}
\email{zhemu.wd@alibaba-inc.com}
\affiliation{%
  \institution{Alibaba Group}
  \city{Beijing}
  \country{China}
}

\author{Kaiyi Lin}
\email{linkaiyi.lky@alibaba-inc.com}
\affiliation{%
  \institution{Alibaba Group}
  \city{Beijing}
  \country{China}
}

\author{Pengjie Wang}
\email{pengjie.wpj@alibaba-inc.com}
\affiliation{%
  \institution{Alibaba Group}
  \city{Beijing}
  \country{China}
}

\author{Chuan Yu}
\email{yuchuan.yc@alibaba-inc.com}
\affiliation{%
  \institution{Alibaba Group}
  \city{Beijing}
  \country{China}
}

\author{Jian Xu}
\email{xiyu.xj@alibaba-inc.com}
\affiliation{%
  \institution{Alibaba Group}
  \city{Beijing}
  \country{China}
}

\author{Bo Zheng$^{\dagger}$}
\email{bozheng@alibaba-inc.com}
\affiliation{%
  \institution{Alibaba Group}
  \city{Beijing}
  \country{China}
}

\thanks{$*$ These authors contributed equally to this work and are co-first authors.}
 \thanks{$\dagger$ Corresponding author}

\renewcommand{\shortauthors}{Bencheng Yan, et al.}

\begin{abstract}
Despite massive investments in scale, deep models for click-through rate (CTR) prediction often exhibit rapidly diminishing returns---a stark contrast to the \change{predictable scaling laws} seen in large language models (LLMs). We identify the root cause as a \change{fundamental} \textit{structural misalignment}: \change{standard} Transformers assume sequential compositionality, whereas CTR data demand combinatorial reasoning over \change{heterogeneous} fields. 
To restore alignment, we introduce the \textbf{Field-Aware Transformer (FAT)}. \change{By reconstructing the standard Transformer block with field-centric parameters, FAT achieves \textit{structured expressivity}, \changenew{fundamentally shifting the model complexity dependence from the total vocabulary size $n$ with the number of fields $F$ ($n \gg F$).}}
Crucially, to decouple model capacity from field cardinality, FAT employs a \change{{Basis-Composed Hypernetwork}} to synthesize field-specific parameters from shared bases, further reducing parameter complexity. \change{Theoretically, we ground this scaling behavior through a formal scaling law based on Rademacher complexity. Empirically, FAT outperforms exisiting state-of-the-art methods with up to \textbf{\changenew{+4.38\%}} AUC improvement, and delivers \textbf{+2.33\%} CTR and \textbf{+0.66\%} RPM in live production.} Our work establishes that scalable recommendation arises not from size alone, but from \textit{structured expressivity}---architectural coherence with data semantics.


\end{abstract}






\begin{CCSXML}
<ccs2012>
   <concept>
       <concept_id>10002951.10003317.10003338.10010403</concept_id>
       <concept_desc>Information systems~Novelty in information retrieval</concept_desc>
       <concept_significance>500</concept_significance>
       </concept>
   <concept>
       <concept_id>10002951.10003260.10003272.10003273</concept_id>
       <concept_desc>Information systems~Sponsored search advertising</concept_desc>
       <concept_significance>500</concept_significance>
       </concept>
   <concept>
       <concept_id>10002951.10003260.10003261.10003271</concept_id>
       <concept_desc>Information systems~Personalization</concept_desc>
       <concept_significance>500</concept_significance>
       </concept>
 </ccs2012>
\end{CCSXML}

\ccsdesc[500]{Information systems~Novelty in information retrieval}
\ccsdesc[500]{Information systems~Sponsored search advertising}
\ccsdesc[500]{Information systems~Personalization}

\vspace{-1em}
\keywords{Scaling Law,Recommendation Systems,CTR Prediction}

\maketitle

\vspace{-1em}
\section{Introduction}
\label{sec:introduction}

The success of large language models (LLMs) has revealed a powerful truth: when architecture and data are aligned, scaling becomes predictable. As model size, data volume, and compute increase, performance improves smoothly, enabling systematic progress~\cite{radford2018improving,radford2019language,DBLP:conf/acl/ZhaoQSTLMWM24,DBLP:journals/corr/abs-1807-03748,DBLP:journals/corr/abs-2001-08361}. This principle has inspired widespread efforts to transplant Transformer architectures into industrial recommendation systems, particularly for click-through rate (CTR) prediction~\cite{DBLP:journals/corr/abs-2311-05884,DBLP:conf/icml/ZhaiLLWLCGGGHLS24,DBLP:conf/icml/ZhangLCNLLZHYWP24,DBLP:conf/cikm/ZhuFZJWHDWZGYCC25,DBLP:conf/cikm/SunLWPLOJ19,DBLP:conf/icdm/KangM18,DBLP:conf/cikm/LiuHM24,yan2026unlocking}.
Yet, despite promising gains, most existing approaches remain at the level of \textbf{architectural mimicry}: they directly transplant LLM designs by tokenizing CTR features and applying standard Transformers, either within traditional pointwise prediction frameworks~\cite{DBLP:journals/corr/abs-2311-05884,DBLP:conf/icml/ZhangLCNLLZHYWP24,DBLP:conf/cikm/ZhuFZJWHDWZGYCC25,DBLP:journals/corr/abs-2409-12740} or under generative reformulations~\cite{DBLP:conf/icml/ZhaiLLWLCGGGHLS24,DBLP:journals/corr/abs-2410-02939,DBLP:conf/cikm/LiuHM24,yan2026unlocking}. While these methods benefit from increased model capacity, empirical studies have observed diminishing returns in performance as models scale up~\cite{DBLP:journals/corr/abs-2412-00714}. \change{This suggests a potential disconnect between architectural growth and effective learning: simply making the model larger does not guarantee it gets smarter.}
 
\change{We attribute this divergence to a \textbf{structural misalignment} between the inductive biases of standard Transformers and the nature of CTR data.} Although both modalities involve sequences of discrete tokens, their semantic structures are profoundly different:


\begin{itemize} [left=0pt]
    \item In natural language, \change{semantics are \textbf{sequential and compositional}: meaning arises from the hierarchical \changenew{construction} of \textit{homogeneous} tokens. Here, order is strict, and attention is driven by syntactic dependencies and context.}
    \item In CTR prediction, \change{predictive power is \textbf{combinatorial and field-dependent}:} user behavior is driven by specific cross-field conjunctions such as ``young user $\times$ luxury brand'' or ``mobile device $\times$ evening session''. Inputs are \textit{heterogeneous sets} of features, each belonging to a semantic \emph{field} (e.g., \texttt{user\_age}, \texttt{ad\_category}, \texttt{device\_type}), where order is arbitrary; what matters is the underlying interaction topology between distinct semantic fields.
\end{itemize}

Standard self-attention, designed for compositional semantics over dense, ordered sequences, fails to respect this distinction. It treats all embeddings uniformly via globally shared projection matrices.
Under extreme sparsity---where most field-value combinations are rarely observed—this unstructured attention risks amplifying noise and distorting gradients, which can hinder scalable learning. Even more troubling is the absence of a theoretical foundation for scaling in recommendation. While LLMs benefit from well-characterized generalization bounds and scaling laws grounded in statistical learning theory~\cite{DBLP:conf/acl/ZhaoQSTLMWM24,DBLP:journals/corr/abs-1807-03748,DBLP:journals/corr/abs-2001-08361}, 
\change{comparable theoretical frameworks remain largely underexplored for CTR models}. Without them, scaling becomes a trial-and-error process, disconnected from architectural design principles. 
This leads us to a pivotal question: \textit{Can we redesign the Transformer for recommendation such that its expressive capacity grows in harmony with the underlying interaction complexity of the data---not merely in raw parameter count, but in structured expressivity?}

To answer this, we draw upon a classical insight: \textbf{field-aware interaction modeling}. Models like Field-aware Factorization Machines (FFM)~\cite{DBLP:conf/recsys/JuanZCL16} assign dedicated latent vectors to each ordered field pair, enabling context-sensitive modeling.
\change{Inspired by this, a natural extension would be to make the Transformer's projection matrices \textit{specialized per field pair}. However, a direct realization incurs prohibitive parameter growth: with $F$ semantic fields, the number of interaction-specific parameters scales quadratically as $\mathcal{O}(F^2 d^2)$. In real-world systems where $F \sim 10^3$ and $d\sim 128$, even a moderate base model can balloon from 100 million to over \textbf{10 trillion parameters}, rendering deep scaling infeasible.}


To resolve this tension between expressivity and scalability, we introduce the \textbf{Field-Aware Transformer (FAT)}, \change{which achieves \textbf{structured expressivity} by reconstructing the standard Transformer block. Specifically, we introduce \textbf{Field-Decomposed Attention}, which factorizes the prohibitively expensive pair-wise transformation into two manageable components: (i) \textit{field-aware content alignment}, where \changenew{queries, keys, and values} are projected using matrices specific to their own fields (scaling as $\mathcal{O}(F d^2)$), and (ii) \textit{field-pair interaction modulation}, which governs information flow via lightweight scalars (scaling as $\mathcal{O}(F^2)$). Complementing this interaction modeling, we utilize a \textbf{Field-Aware FFN} to adapt intra-field feature distributions, ensuring distinct statistical properties of heterogeneous fields are respected. Additionally, we employ a \textbf{Basis-Composed Hypernetwork} to dynamically synthesize field-specific parameters from a compact set of shared bases. This design significantly reduces parameter complexity and enables agile feature evolution with no additional inference overhead.}

\change{Moreover, we provide a theoretical foundation for these results by deriving a \textbf{formal scaling law for CTR models} based on Rademacher complexity. Our analysis indicates that} FAT’s generalization error depends on the combinatorial structure of field interactions rather than the total vocabulary size. \change{This theoretical perspective offers a potential explanation for the {power-law scaling} in AUC observed in our experiments, contrasting with the saturation often seen in baseline Transformers.}


Our contributions are summarized as follows:
\begin{itemize} [left=0pt]
    \item \textbf{Methodology}: We propose \textbf{FAT}, which aligns the Transformer architecture with CTR via field-aware priors, while its hypernetwork implementation enables sustainable feature evolution.

    \item \textbf{Theoretical Analysis}: We derive a generalization bound based on Rademacher complexity. This provides a theoretical foundation for scaling in CTR, suggesting that performance depends on the combinatorial structure of interactions.
    
    \item \textbf{Empirical Validation}: On large-scale benchmarks, FAT sets a new state-of-the-art with up to \textbf{+4.38\%} AUC improvement. Deployed online, it delivers \textbf{+2.33\%} CTR and \textbf{+0.66\%} RPM, demonstrating significant business impact.
\end{itemize}
    
    
    


\vspace{-1em}
\section{Related Work}
\label{sec:related_work}


\subsection{\change{Structured Feature Interaction Modeling}}
\noindent\textbf{\change{Shallow Factorization Models.}}
Factorization Machines (FM)~\cite{DBLP:conf/icdm/Rendle10} laid the foundation for interaction modeling by introducing low-rank pairwise approximations. This concept was refined by Field-aware FM (FFM)~\cite{DBLP:conf/recsys/JuanZCL16}, which assigns field-pair-specific latent vectors to model context-sensitive effects. Field-weighted FM (FwFM)~\cite{DBLP:conf/www/PanXRZPSL18} was proposed to capture interaction importance via lightweight learnable scalar weights. However, regardless of their parameter efficiency, these models are inherently limited to shallow, second-order interactions.


\smallskip
\noindent\textbf{\change{Deep Interaction Architectures.}}
\change{To capture high-order patterns,} neural extensions~\cite{DBLP:conf/ijcai/GuoTYLH17,DBLP:conf/cikm/SongS0DX0T19,DBLP:conf/www/WangSCJLHC21,yan2022apg,li2021explicit} \change{integrate Multilayer Perceptrons (MLPs) or multi-head self-attention layers}. However, these methods typically restrict field-aware priors to the input layer. Consequently, structured interaction modeling remains confined to shallow architectures, preventing deep compositionality and making the principled scaling of structured interactions infeasible.



\subsection{Towards Predictable Scaling in Recommendation}

\noindent\textbf{\change{Transformer Adaptation and Misalignment.}}
Recent efforts adopt Transformer architectures for CTR prediction~\cite{DBLP:conf/icml/ZhangLCNLLZHYWP24,DBLP:conf/icml/ZhaiLLWLCGGGHLS24,DBLP:journals/corr/abs-2311-05884,DBLP:conf/icdm/KangM18,DBLP:conf/cikm/ZhuFZJWHDWZGYCC25,wang2025scaling,ye2025fuxi}.However, standard self-attention operates under assumptions---sequential order, dense tokens, compositional syntax---that do not hold in recommendation, where inputs are unordered, sparse sets with combinatorial semantics. Applying unstructured attention may lead to inefficient representation learning and poor generalization under sparsity.


\noindent\textbf{\change{Scaling Behavior and Bottlenecks.}}
While LLMs exhibit predictable power-law scaling where performance improves smoothly with model size and compute~\cite{DBLP:journals/corr/abs-2001-08361}, CTR models struggle to replicate this behavior. Empirical studies indicate that simply scaling up the width or depth of generic deep models often leads to diminishing returns or even performance degradation due to overfitting~\cite{DBLP:conf/icml/ZhaiLLWLCGGGHLS24}. This suggests a misalignment between architectural capacity and data structure, and current literature lacks a principled connection between model design and scalable behavior. \change{Consequently, a critical research gap persists: existing approaches typically force a trade-off between the fine-grained, interpretable priors of factorization models and the scalable capacity of deep neural networks. How to reconcile these conflicting objectives---achieving deep, predictable scaling without sacrificing domain-aware structural constraints---remains an open challenge in the field.}




\begin{figure*}[t]
\centering
\includegraphics[width = .9\textwidth]{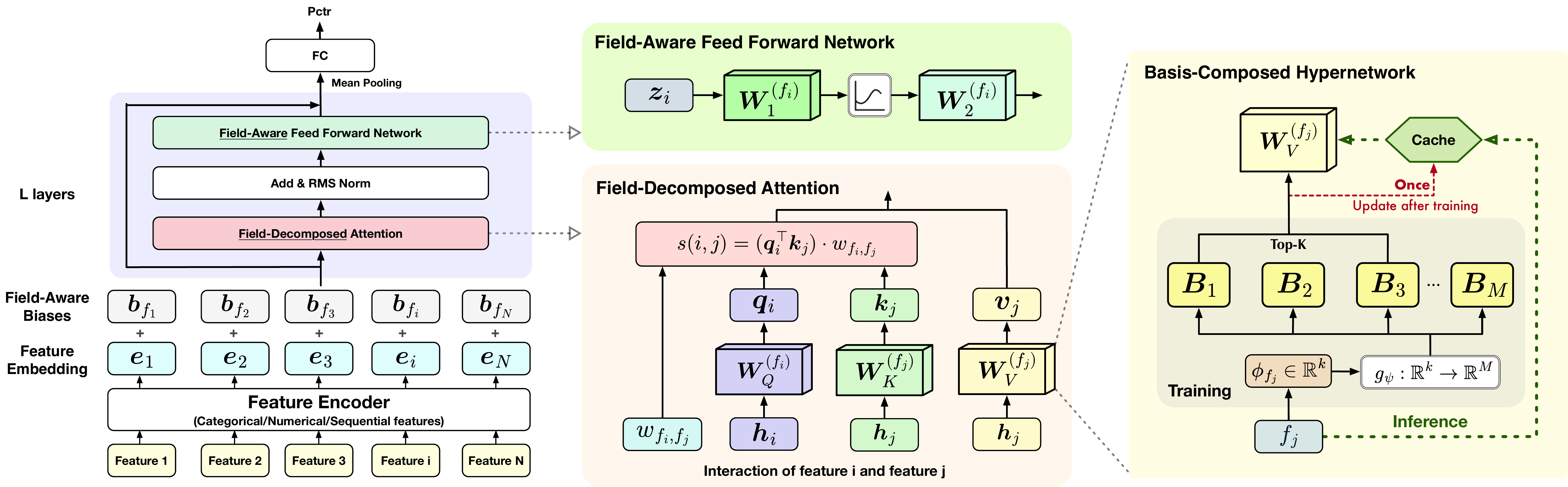}
\vspace{-1em}
\caption{\changenew{The architecture of Field-Aware Transformer (FAT).}}
\vspace{-1em}
\label{fig:fat_arch}
\end{figure*}

\vspace{-1.5em}
\section{Method}
\label{sec:method}

As illustrated in Figure~\ref{fig:fat_arch}, the Field-Aware Transformer (FAT) \change{enforces semantic consistency from input to output. Building on \textbf{Field-Aware Tokenization}}~(Section~\ref{sec:fa-tokenizer}), it achieves \emph{structured expressivity} by \change{refining the Transformer block with two field-centric components: (1) \textbf{Field-Decomposed Attention}~(Section~\ref{sec:fa-attention}) to govern cross-field interactions; and (2) \textbf{Field-Aware Feed Forward Network}~(Section~\ref{sec:fa-ffn}) to adapt intra-field feature distributions.} To decouple model capacity from field cardinality, FAT employs a \change{\textbf{Basis-Composed Hypernetwork}~(Section~\ref{sec:hypernetwork}) that synthesizes these field-specific parameters from shared bases. This design dramatically reduces parameter complexity from $\mathcal{O}(F^2 d^2)$ to $\mathcal{O}(F d^2 + F^2)$,} ensuring scalability and tighter generalization bounds under extreme sparsity. 



\vspace{-1em}
\subsection{Field-Aware Tokenization}
\label{sec:fa-tokenizer}

In CTR prediction, inputs are unordered sets of heterogeneous features drawn from distinct semantic fields (e.g., user, item, context). \change{To unify these inputs, we map each raw feature $x_i$ to a dense token $\boldsymbol{e}_i \in \mathbb{R}^d$ via type-specific encoders: embedding lookups for categorical and discretized numerical features, and sequence pooling (e.g., DIN~\cite{DBLP:conf/kdd/ZhouZSFZMYJLG18}) for user behavior lists.}

\change{Accordingly, given the non-sequential nature of these feature sets, we replace standard positional encodings with \emph{field-aware biases}. This injects structural priors based on semantic roles rather than arbitrary positions. For a token $i$ belonging to field $f_i$, the final input representation is defined as:}
    $\boldsymbol{h}_i = \boldsymbol{e}_i + \boldsymbol{b}_{f_i},$
\change{where $\boldsymbol{b}_{f_i} \in \mathbb{R}^d$ is a learnable bias vector unique to field $f_i$. This formulation ensures permutation invariance while grounding each token in its semantic context, establishing the foundation for the field-decomposed modeling in FAT.}

\vspace{-1.5em}
\subsection{Field-Decomposed Attention}
\label{sec:fa-attention}
\change{Standard self-attention operates on a global basis, ignoring the heterogeneous nature of CTR fields. While assigning distinct projection matrices to each field pair $(f_i,f_j)$ could capture these interactions, it results in a prohibitive $\mathcal{O}(F^2d^2)$ parameter complexity.}

\change{To resolve this, we propose Field-Decomposed Attention, which factorizes the attention mechanism into: (1) \textbf{Field-aware content alignment}: how well two tokens interact, given their respective semantic roles; (2) \textbf{Field-pair interaction modulation}: how strongly information should flow from one field to another.  }

\change{Notably, to enable deep semantic alignment, we assign independent projection matrices $\{\boldsymbol{W}_Q^{(f)}, \boldsymbol{W}_K^{(f)}, \boldsymbol{W}_V^{(f)}\}$ to each field $f$. The raw attention score is then defined as:}
\begin{equation}
    s(i,j)=\left(\boldsymbol{q}_i^\top \boldsymbol{k}_j\right) \cdot w_{f_i,f_j}\ ,\ \text{with } \boldsymbol{q}_i=\boldsymbol{h}_i\boldsymbol{W}_Q^{(f_i)},\  \boldsymbol{k}_j=\boldsymbol{h}_j\boldsymbol{W}_K^{(f_j)}
\end{equation}
{The first term refers \textbf{field-aware content alignment}. Unlike generic similarity, the field-specific projections allow the model to tailor the transformation based on the field's role. For instance, it can distinguish whether an \texttt{age} token should be encoded differently when acting as a user profile feature versus a context feature.
Complementing this, the scalar parameter $w_{f_i,f_j} \in \mathbb{R}$ executes \textbf{field-pair interaction modulation}. Acting as a gate, it governs the information flow strength between fields. This allows for fine-grained control; for example, a large $w_{\texttt{ad\_category}, \texttt{user\_behavior}}$ amplifies relevant user interests, while a small $w_{\texttt{ad\_category},\texttt{device\_type}}$ suppresses noise from irrelevant couplings.}
\change{Finally, we derive the layer output by aggregating the field-projected value vectors $\boldsymbol{v}_j = \boldsymbol{h}_j\boldsymbol{W}_V^{(f_j)}$:}
\begin{equation}
    \text{FAT-Attn}(\boldsymbol{H})_i = \sum_{j=1}^N \left(\frac{\exp\left(s(i,j) / \sqrt{d}\right)}{\sum_{k=1}^N \exp\left(s(i,k) / \sqrt{d}\right)} \right) \boldsymbol{v}_j
\end{equation}

\change{This attention mechanism provides three critical advantages:}
\begin{itemize}[left=0pt]
 \item \textbf{Hierarchical Field Awareness}: The content alignment operates at a \textbf{\change{coarse-grained per-field level}} to encode semantic roles, while interaction modulation functions at a \textbf{\change{fine-grained field-pair level}} to govern routing strength.
 
 \item \textbf{\change{Interpretability and Assymetry}}: \change{The learned scalars $w_{f_i,f_j}$ naturally model directional effects ($w_{f_i,f_j} \neq w_{f_j,f_i}$) and reveal interpretable interaction patterns and insights between fields.}
 \item \textbf{Efficiency and Scalability}: \change{FAT eliminates the need for full-rank field-pair matrices, collapsing the parameter complexity from $\mathcal{O}(F^2 d^2)$ to $\mathcal{O}(F d^2 + F^2)$. This yields a reduction of over 99\% in typical settings (e.g., \changenew{from 150B to 0.5B}), enabling scalable training without sacrificing semantic fidelity.}
\end{itemize}
\change{In summary, {Field-Decomposed Attention} establishes the foundation for structured expressivity by decoupling representation from routing. It scales interaction modeling not by adding unstructured capacity, but by deepening semantic resolution in a controlled way.}

\vspace{-0.5em}
\subsection{Field-Aware Feed Forward Network}
\label{sec:fa-ffn}
\change{Similarly, standard Feed Forward Networks (FFN) apply uniform transformations across all tokens, ignoring the distinct feature distributions of heterogeneous fields. To address this, we introduce a Field-Aware FFN. Let $\boldsymbol{z}_i \in \mathbb{R}^d$ denote the input representation for token $i$. Its transformation is explicitly conditioned on its field $f_i$:}
\begin{equation}
    \text{FAT-FFN}(\boldsymbol{z}_i)=\text{SiLU}\left(\boldsymbol{z}_i \boldsymbol{W}_1^{(f_i)}\right) \boldsymbol{W}_2^{(f_i)}
\end{equation}
\change{where $\boldsymbol{W}_1^{(f_i)} \in \mathbb{R}^{d\times d_{ff}}$ and $\boldsymbol{W}_2^{(f_i)} \in \mathbb{R}^{d_{ff}\times d}$ are field-specific projection matrices. Here, $d_{ff}$ is the intermediate expansion dimension.}

\subsection{Basis-Composed Hypernetwork}
\label{sec:hypernetwork}
\change{Despite its expressiveness, storing full-rank matrices for all fields (i.e., $\{\boldsymbol{W}_Q^{(f)}, \boldsymbol{W}_K^{(f)}, \boldsymbol{W}_V^{(f)}, \boldsymbol{W}_1^{(f)}, \boldsymbol{W}_2^{(f)}\}$) incurs prohibitive memory costs in large-scale systems. }
To decouple parameter growth from field cardinality while preserving semantic fidelity, we introduce a \emph{basis-composed hypernetwork} \change{to generate} field-specific projections.

\change{We construct {type-specific} basis sets to capture the distinct structural properties of each matrix. Let $\Phi \in \{Q, K, V, 1, 2\}$ denote the parameter type. For each type $\Phi$, we maintain a separate bank of $M$ basis matrices $\mathcal{B}_\Phi=\{\boldsymbol{B}_{1,\Phi},\cdots,\boldsymbol{B}_{M,\Phi}\}$, where each $\boldsymbol{B}_{m,\Phi}$ shares the identical dimensions as the target projection $\boldsymbol{W}_\Phi^{(f)}$.}
\change{For a given field $f$, a lightweight \changenew{single-layer MLP router} $g_\psi(\cdot)$} \changenew{maps the field embedding $\phi_f \in \mathbb{R}^{k}$ to a field-specific selection signature $s^{(f)} \in \mathbb{R}^M$}:
\begin{equation}
    s^{(f)} = g_\psi(\phi_f), \quad \pi_f = \text{top-}K(s^{(f)})
\end{equation}
\change{We then compute sparse blending coefficients via softmax over the selected indices:}
\vspace{-0.5em}
\begin{equation}
    \alpha_m^{(f)} = \frac{\exp\left( s_m^{(f)} \right)}{\sum_{m' \in \pi_f} \exp \left( s_{m'}^{(f)} \right)}, \quad \forall m \in \pi_f
\end{equation}
\change{Finally, the field-specific projection matrix for type $\Phi$ is synthesized by combining the corresponding bases:}
\begin{equation}
\boldsymbol{W}_\Phi^{(f)} = \sum\nolimits_{m \in \pi_f} \alpha_m^{(f)} \boldsymbol{B}_{m, \Phi}
\end{equation}



The hypernetwork enables scalable architecture design with three key advantages:
\begin{itemize}[left=0pt]
    \item \textbf{Decoupled complexity from field \change{cardinality}:}
    Field-specific projections are synthesized \change{from shared bases}, breaking the \change{prohibitively expensive} $\mathcal{O}(F d^2)$ parameter dependency. \change{Storage complexity grows linearly as $\mathcal{O}(M d^2 + F k)$, preventing parameter explosion even as the number of fields $F$ scales to thousands}.

    \item \textbf{\change{Sustainable feature evolution:}}
    \change{The design supports agile industrial iteration. Introducing new feature fields requires only initializing a lightweight meta-embedding $\phi_f$, rather than allocating full-rank matrices. This ensures continuous feature expansion with minimal storage overhead and rapid model adaptation.}
    
    \item \textbf{Zero \change{dynamic generation cost}:}
    \change{While dynamically defined during training, parameters are materialized offline for inference.} All \change{field-specific matrices} are precomputed and cached after training. \change{Consequently, this ensures zero inference overhead, meeting strict production constraints.}
\end{itemize}




\subsection{CTR Prediction}
\change{FAT stacks $L$ layers of Field-Aware Transformer blocks. Each block consists of a field-decomposed attention sub-layer and a field-aware FFN sub-layer, employing layer normalization and residual connections to facilitate deep model training. For the $\ell$-th layer, the update rule is:} 
\begin{equation}
    \boldsymbol{Z}^{(\ell+1)} = \text{FAT-FFN}\left(\text{LayerNorm}\left(\text{FAT-Attn}\left(\boldsymbol{Z}^{(\ell)}\right)\right)\right) + \boldsymbol{Z}^{(\ell)}.
    \label{eq:residual}
\end{equation}

\change{Finally, we employ sum-pooling to aggregate field representations into a global instance vector, followed by a sigmoid projection for CTR prediction:}
    $\hat{y} = \sigma\left(  \sum_{i=1}^N \boldsymbol{z}_i^{(L)} \boldsymbol{w}\right),$
\change{where $\sigma(\cdot)$ is the sigmoid function, $\boldsymbol{z}_i^{(L)}$ is the final output vector for token $i$, and $\boldsymbol{w} \in \mathbb{R}^d$ is a learnable weight vector.}

\section{\change{Theoretical Analysis: Generalization and Scaling}}
\label{sec:theory}

One of the most sought-after goals in recommendation systems is achieving predictable, smooth performance improvement as model size increases---a phenomenon well-documented in large language models through empirical scaling laws~\cite{DBLP:journals/corr/abs-2001-08361}. However, in CTR prediction, naive scaling often leads to performance saturation or degradation due to unstructured capacity growth that amplifies noise rather than signal. We argue that FAT enables \textit{principled} scaling by aligning architectural design with the combinatorial semantics of feature interactions. 
\change{In particular, the effective model complexity of a FAT attention layer is governed by the number of semantic fields $F$ and field-pair interaction structure (rather than the raw vocabulary size $n = \sum_{i=1}^F |\mathcal{V}_i|$, which can exceed $10^9$ in industrial settings).}
\change{This alignment yields tighter, schema-dependent generalization control and supports well-behaved scaling in practice. Our main theoretical result establishes a generalization bound for a single FAT attention layer; the complete proof is provided in Appendix~\ref{app:theory}.}

\begin{theorem}[Generalization Bound for FAT]
\label{thm:generalization}
Let $\mathcal{D}$ be a distribution over input sequences $\boldsymbol{H} = [\boldsymbol{h}_1, ..., \boldsymbol{h}_N]$ with $\|\boldsymbol{h}_i\|_2 \leq R$. Assume all parameter matrices ($\boldsymbol{W}^{(f)}_Q, \boldsymbol{W}^{(f)}_K, \boldsymbol{W}^{(f)}_V$) have Frobenius norm bounded by $B$, and all interaction scalars ($w_{f_i,f_j}$) are bounded by $B_w$. Let $m$ be the number of training samples, \change{and let $L_{\text{train}}$ and $L_{\text{gen}}$ denote the empirical and population risks under a bounded $1$-Lipschitz loss.} Then, with probability at least $1-\delta$, the generalization error $L_{\text{gen}}$ of a single FAT layer satisfies:
\[
L_{\text{gen}} \leq L_{\text{train}} + \mathcal{O}\left( \frac{\sqrt{F d^2 + F^2}}{\sqrt{m}} \cdot C(R, B, B_w, d) + \sqrt{\frac{\log(1/\delta)}{m}} \right),
\]
where $C(R, B, B_w, d) = \mathcal{O}(R^2 B^2 B_w \sqrt{d} + R B B_w)$ is a constant depending on the norm bounds and embedding dimension.
\end{theorem}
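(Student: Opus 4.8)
The plan is to instantiate the classical Rademacher-complexity route to uniform convergence. First I would invoke the standard symmetrization bound: for the loss composed with the FAT layer, with probability at least $1-\delta$ one has $L_{\text{gen}} \leq L_{\text{train}} + 2\mathcal{R}_m(\mathcal{F}) + \sqrt{\log(1/\delta)/(2m)}$, where $\mathcal{R}_m(\mathcal{F})$ is the empirical Rademacher complexity of the hypothesis class $\mathcal{F}$ of single FAT layers whose parameters are the (at most) $3F$ projection matrices with $\|\cdot\|_F \leq B$ together with the $F^2$ modulation scalars bounded by $B_w$. The $\sqrt{\log(1/\delta)/m}$ term in the statement then comes directly from the concentration tail, so the entire task reduces to controlling $\mathcal{R}_m(\mathcal{F})$ and showing it scales as $\sqrt{(Fd^2+F^2)/m}\cdot C(R,B,B_w,d)$.

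Second, I would propagate norm and Lipschitz bounds through the computation graph of a single layer. Since $\|\mathbf{W}^{(f)}\|_{\mathrm{op}} \leq \|\mathbf{W}^{(f)}\|_F \leq B$ and $\|\mathbf{h}_i\|_2 \leq R$, the queries, keys, and values satisfy $\|\mathbf{q}_i\|,\|\mathbf{k}_j\|,\|\mathbf{v}_j\| \leq BR$, whence each pre-softmax logit $(\mathbf{q}_i^\top\mathbf{k}_j)\, w_{f_i,f_j}/\sqrt{d}$ is bounded in magnitude by $B^2R^2B_w/\sqrt{d}$. I would then use that the softmax map is Lipschitz (its row-stochastic outputs lie in the simplex) and that the layer output $\sum_j \alpha_h(i,j)\mathbf{v}_j$ is a convex combination of value vectors, so it inherits the bound $BR$ and is Lipschitz in each parameter block. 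The constant $C(R,B,B_w,d)=\mathcal{O}(R^2B^2B_w\sqrt{d}+RB B_w)$ should emerge here: the first summand tracks sensitivity routed through the bilinear content-alignment score (two factors of $R$, two of $B$, the modulation $B_w$, and a $\sqrt{d}$ arising when the coordinatewise logit perturbations are aggregated across the embedding dimension), while the second tracks sensitivity routed through the value pathway and the modulation scalars.

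Third, with a per-layer Lipschitz-in-parameters bound in hand, I would obtain the dimension factor $\sqrt{Fd^2+F^2}$ by a covering-number (Dudley entropy integral) argument over the parameter ball. The parameter space is a product of $\mathcal{O}(F)$ Frobenius balls in $\mathbb{R}^{d\times d}$ together with an $F^2$-dimensional box, so its metric entropy at scale $\varepsilon$ is $\mathcal{O}((Fd^2+F^2)\log(1/\varepsilon))$; composing the covering with the Lipschitz constant and integrating yields $\mathcal{R}_m(\mathcal{F}) = \mathcal{O}(\sqrt{(Fd^2+F^2)/m}\cdot C(R,B,B_w,d))$. The decisive structural point—and the reason the bound depends on $F$ rather than the vocabulary size $n$—is that field-decomposed attention ties all tokens of a field to the \emph{same} projection matrix, so the effective parameter count is $\mathcal{O}(Fd^2+F^2)$ and never touches the per-value embeddings; I would make this explicit by taking the covering over the shared field-level parameters only.

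The step I expect to be the main obstacle is controlling the softmax normalization jointly with the bilinear attention score. Unlike a purely linear head, the logit $\mathbf{q}_i^\top\mathbf{k}_j$ is \emph{bilinear} in the pair $(\mathbf{W}_Q^{(f_i)},\mathbf{W}_K^{(f_j)})$, so perturbing either matrix alters every row of the attention matrix, and the softmax couples the resulting perturbations across all keys through the shared denominator. The careful part is to show that the Lipschitz constant of the end-to-end map stays polynomial in the norm bounds (as encoded in $C$) despite this coupling—most cleanly by bounding the operator norm of the softmax Jacobian by the spread of its outputs over the simplex and chaining it with the bilinear-form sensitivity, so that the $1/\sqrt{d}$ temperature keeps the logit perturbations controlled while aggregation over the $d$ coordinates and the $N$ tokens contributes only the benign $\sqrt{d}$ factor appearing in $C$.
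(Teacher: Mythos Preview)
Your proposal is correct and would deliver the stated bound, but it follows a somewhat different technical route from the paper's proof. The paper works directly with the Rademacher complexity of the \emph{score} function class $\mathcal{S}=\{(i,j)\mapsto(\mathbf{W}_Q^{(f_i)}\mathbf{h}_i)^\top(\mathbf{W}_K^{(f_j)}\mathbf{h}_j)\cdot w_{f_i,f_j}\}$, splitting the contribution into the $\mathcal{O}(Fd^2)$ part from the field-specific projection matrices and the $\mathcal{O}(F^2)$ part from the modulation scalars via ``standard results for linear function classes under norm constraints,'' and then propagates this complexity through the softmax and the value pathway using Talagrand's contraction lemma. Your plan instead packages everything into a single Lipschitz-in-parameters bound for the whole layer and then runs a Dudley entropy-integral / covering-number argument over the $\mathcal{O}(Fd^2+F^2)$-dimensional parameter ball. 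Both are textbook; your covering-number route is arguably more systematic for the bilinear score (the bilinearity is absorbed once and for all into the Lipschitz constant rather than handled by a somewhat informal ``sum of function classes'' decomposition), while the paper's contraction-based route keeps the score and value contributions conceptually separate and avoids the extra logarithmic factor that Dudley typically introduces. Either approach makes the key structural point you highlight---that the effective parameter count is $\mathcal{O}(Fd^2+F^2)$, independent of the vocabulary size $n$---equally transparent.
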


This bound highlights a key advantage over standard Transformers. \change{\change{Standard self-attention operates uniformly across all tokens, lacking the structural constraints to distinguish between semantic fields. This unconstrained flexibility forces the model to learn interaction patterns from scratch, making it highly susceptible to overfitting spurious correlations under extreme data sparsity.} In contrast, FAT's field-conditioned transformations depend on the number of semantic fields $F$ (and field-pair modulations $F^2$), confining the effective hypothesis space to $\mathrm{poly}(F)$. This explicitly decouples the learning complexity from the massive vocabulary size $n$.} Consequently, every parameter is shared across vast amounts of data within its field, dramatically improving statistical efficiency and generalization.

Critically, this tight generalization bound enables \textit{predictable scaling}. For a fixed data distribution and field schema $F$, increasing the embedding dimension $d$ (and thus the total parameter count $N_{\text{params}} \propto F d^2$) deepens the model's representational fidelity for both field-aware content alignment and field-pair interaction modulation. This allows the model to learn higher-rank, more nuanced interaction functions between fields, thereby systematically reducing the training error $L_{\text{train}}$. Because the architecture is aligned with the data's combinatorial structure, these additional parameters refine meaningful patterns instead of fitting noise.

Combining these two effects---reduced bias from enhanced expressivity and reduced variance from tight generalization---\change{we provide a theoretical justification for the observed scaling behavior. Specifically, as we scale the model width, the predictable drop in training error coupled with our tight bound supports the power-law trend seen empirically:}
\begin{equation}
\label{eq:scaling_law}
\Delta \text{AUC} \propto N_{\text{params}}^{\beta}, \quad \beta > 0,
\end{equation}
where $\Delta \text{AUC}$ is the performance gain relative to a baseline. 
Empirical validation of this law is presented in Section~\ref{sec:scaling}.

\section{Experiments}
\label{sec:experiments}

We conduct comprehensive experiments to rigorously evaluate the effectiveness, interpretability, scalability, and deployability of our proposed Field-Aware Transformer (FAT). 
We aim to answer five key research questions:

\begin{itemize}[left=0pt]
    \item \textbf{RQ1}: \change{Does FAT consistently outperform state-of-the-art baselines across different model scales and datasets?} 
    \item \textbf{RQ2}: \change{How do the field-aware components and the hypernetwork contribute to the model's effectiveness?} 
    \item \textbf{RQ3}: \change{Do the learned interaction patterns demonstrate semantic coherence and align with domain intuition?} 
    \item \textbf{RQ4}: \change{Does FAT exhibit a predictable power-law scaling trend consistent with our theoretical analysis?} 
    \item \textbf{RQ5}: \change{Is FAT computationally efficient for online serving and does it deliver substantial business gains?} 
\end{itemize}

\subsection{Experimental Setup}

\subsubsection{Dataset}

\change{We conduct evaluations on two complementary datasets to validate both industrial applicability and open-source reproducibility: (1)
    \textbf{Taobao}, a large-scale CTR dataset from {Taobao's sponsored search}, containing over \textbf{14 billion user impressions} collected over two weeks. 
    (2) \textbf{MovieLens-20M}~\cite{DBLP:journals/tiis/HarperK16}, containing 20 million ratings from 138K users on 27K movies. To align with CTR tasks, we binarize ratings (treating $\ge 3$ as clicks), providing a dense, reproducible topology for community benchmarking.
}
\subsubsection{Baselines}
We compare against representative state-of-the-art methods, categorized by modeling paradigm:

\begin{itemize}[left=0pt]
  \item \textbf{Traditional Interaction Models}:  FFM~\cite{DBLP:conf/recsys/JuanZCL16}, DeepFM~\cite{DBLP:conf/ijcai/GuoTYLH17}, AutoInt~\cite{DBLP:conf/cikm/SongS0DX0T19}, and DCNv2~\cite{DBLP:conf/www/WangSCJLHC21}, \change{which explicitly model structured feature interactions. Additionally, we adopt a conventional Embe-dding-MLP framework (denoted as DeepCTR) to serve as a strong baseline for unconstrained implicit interaction learning.}
  \item \textbf{Scaling-Oriented Architectures}: HiFormer~\cite{DBLP:journals/corr/abs-2311-05884}, Wukong~\cite{DBLP:conf/icml/ZhangLCNLLZHYWP24}, HSTU~\cite{DBLP:conf/icml/ZhaiLLWLCGGGHLS24}, and RankMixer~\cite{DBLP:conf/cikm/ZhuFZJWHDWZGYCC25}, representing recent advances in scalable and adaptive model design.
\end{itemize}


\subsubsection{\change{Evaluation Protocol and Implementation Details}}
To enable fair and insightful comparisons, we adopt a \textbf{dual-scale evaluation protocol} aligned with the architectural categories defined above:
\begin{itemize}[left=0pt]
    \item \textbf{Traditional Interaction Models}: Evaluated at their typical capacity of \textbf{$\sim$50M parameters}. We perform rigorous Bayesian search on the validation set to optimize hyperparameters, including embedding dimension, hidden sizes, dropout (0.1--0.5), and L2 regularization ($1e\!-\!6$ to $1e\!-\!3$), to ensure peak performance. Correspondingly, we instantiate \textbf{FAT-Small} ($\sim$50M) for direct comparison.
    
    \item \textbf{Scaling-Oriented Architectures}: Scaled uniformly to approximately \textbf{0.5B parameters} by adjusting width or depth while preserving core architectural constraints (e.g., expert counts in RankMixer or hierarchy levels in HSTU). Hyperparameters are re-tuned under this fixed capacity regime. These are matched against \textbf{FAT-Large} ($\sim$0.5B), alongside \textbf{FAT-XL} ($\sim$1.5B) for scaling trend analysis.
\end{itemize}

\noindent\textbf{Training Infrastructure.}
All experiments are conducted on a distributed training system with 128 NVIDIA GPUs using synchronous data-parallel SGD. \changenew{We use a batch size of 2048 per GPU} and models are optimized using the Adam optimizer ($\beta_1 = 0.9, \beta_2 = 0.999$) with initial learning rates tuned in $\{1e\!-\!4, 3e\!-\!4, 5e\!-\!4, 1e\!-\!3\}$ and a weight decay of $1e\!-\!6$.






\noindent\textbf{Common Settings.} All models share identical feature preprocessing: categorical features are hashed, numerical features are discretized, and sequential behaviors are processed via a shared DIN-style extractor. Embedding dimensions are fixed at $d \in \{8,16,32,64\}$ for different features across all models.


\noindent\textbf{FAT-Specific Configuration.} We set the number of attention heads to 8. The field meta-embeddings $\phi_f \in \mathbb{R}^{64}$ are randomly initialized and shared across layers. The hypernetwork employs $M=64$ shared basis matrices with $\text{Top-}K=3$ sparse activation. Field-pair interaction scalars $w_{f_i,f_j}$ are initialized from $\mathcal{N}(0, 0.01)$.

\begin{table*}[t]
\centering
\caption{\change{Overall CTR performance comparison with state-of-the-art methods under the dual-scale evaluation protocol.}}
\vspace{-1em}
\label{tab:main_results_rq1}
\renewcommand{\arraystretch}{1.02}
\resizebox{1.9\columnwidth}{!}{
\begin{threeparttable}
\begin{tabular}{L{3cm}L{2.5cm}R{2cm}R{2cm}R{2cm}R{2cm}R{2cm}}
\toprule
\multirow{2}{*}{\textbf{Type}} & \multirow{2}{*}{\textbf{Method}} & \multirow{2}{*}{\textbf{\#Params}\tnote{$\dagger$}} & \multicolumn{2}{c}{\textbf{Taobao}} & \multicolumn{2}{c}{\textbf{MovieLens-20M}} \\
\cmidrule(lr){4-5} \cmidrule(lr){6-7}
 & & & \textbf{AUC (\%)} & \textbf{$\Delta$ (\%)} & \textbf{AUC (\%)} & \textbf{$\Delta$ (\%)} \\
\midrule
\textbf{Baseline} & DeepCTR & \textit{40M} & 77.62 & - & 80.12 & - \\
\midrule
\multirow{5}{*}{\textbf{Traditional}}
 & DeepCTR-Large & \textit{0.48B} & 77.64 & +0.02 & 80.16 & +0.04 \\
 & FFM & \textit{25M} & 77.32 & -0.30 & 79.85 & -0.27 \\
 & DeepFM  & \textit{45M} & 77.47 & -0.15 & 80.22 & +0.10 \\
 & AutoInt  & \textit{57M} & 77.67 & +0.05 & 80.48 & +0.36 \\
 & DCNv2  & \textit{47M} & 77.68 & +0.06 & 80.53 & +0.41 \\
\midrule
\multirow{4}{*}{\textbf{Scaling-Oriented}}
 & Wukong  & \textit{0.54B} & 77.73 & +0.11 & 81.02 & +0.90 \\
 & HSTU  & \textit{0.54B} & 77.72 & +0.10 & 81.22 & +1.10 \\
 & HiFormer& \textit{0.58B} & 77.79 & +0.17 & 80.82 & +0.70 \\
 & RankMixer & \textit{0.51B} & 77.85 & +0.23 & 81.62 & +1.50 \\
\midrule
\multirow{3}{*}{\textbf{Ours}}
 & FAT-Small & \textit{52M} & 77.78 & {+0.16} & 80.94 & +0.82 \\
 & FAT-Large & \textit{0.54B} & {78.09} & {{+0.47}} &{83.84} & {+3.72} \\
 & FAT-XL & \textit{1.5B} & \textbf{78.20} & \textbf{{+0.58}} & \textbf{84.50} & \textbf{+4.38} \\
\bottomrule
\end{tabular}
\begin{tablenotes}[flushleft]
\footnotesize
\item[$\dagger$] \changenew{\textbf{\#Params:} Number of dense (non-embedding) parameters.}
\end{tablenotes}
\vspace{-10pt}
\end{threeparttable}
}
\end{table*}

\subsection{Main Results (RQ1)}
\label{sec:rq1_results}

Guided by the evaluation protocol defined above, we present the comparative results in Table~\ref{tab:main_results_rq1}. Our analysis aims to verify whether FAT's performance gains are robust across different capacity regimes, thereby disentangling architectural superiority from simple parameter inflation.

\smallskip

\noindent\textbf{Efficacy in the Small-Scale Regime ($\sim$50M).}
In the typical industrial setting limited to $\sim$50M parameters, \textbf{FAT-Small} consistently outperforms traditional interaction models. 
A critical observation is the \textit{performance saturation} of these baselines: scaling DeepCTR from 40M to 0.48B parameters (DeepCTR-Large) yields negligible improvement (+0.02\%). This indicates that traditional architectures cannot effectively utilize additional capacity. 
In contrast, FAT's superior performance (\textbf{+0.16\%}) under the exact same parameter constraints demonstrates exceptional \textit{parameter efficiency}, proving that its field-aware attention decomposition captures interactions more effectively than implicit MLP structures.


\noindent\textbf{Superiority in the Large-Scale Regime ($\sim$0.5B).}
When scaled to $\sim$0.5B parameters, \textbf{FAT-Large} surpasses modern scaling-oriented architectures. Since all comparison models are uniformly scaled and hyperparameter-tuned, FAT’s persistent lead confirms that its advantage derives from \textit{superior inductive bias} rather than capacity alone.
This structural advantage is even more pronounced on dense topologies: on MovieLens-20M, FAT-Large achieves a massive \textbf{+3.72\%} improvement. We attribute this to FAT's ability to explicitly model dense, high-value semantic connections (e.g., User-Item), whereas standard global attention tends to dilute these specific signals amidst noise.


\noindent\textbf{Scaling Potential.}
Furthermore, extending the model to \textbf{1.5B} parameters (\textbf{FAT-XL}) yields continued performance gains without signs of saturation. This contrasts sharply with the stagnation observed in baselines and offers a preliminary validation of FAT's favorable scaling properties (analyzed further in Section~\ref{sec:scaling}).


\begin{table}[t]
\centering
\caption{Ablation study on FAT components. Values show relative AUC improvement (\%) over the baseline (DeepCTR).}
\label{tab:ablation_rq2}
\renewcommand{\arraystretch}{1.15}
\resizebox{.99\columnwidth}{!}{
\begin{threeparttable}
\begin{tabular}{L{7.5cm}P{2cm}}
\toprule
\textbf{Variant} & \textbf{$\Delta$}\textbf{AUC (\%)} \\
\midrule
{\textbf{FAT-Large}} & \textbf{{+0.47}} \\
\midrule
w/o Field-Aware Biases {($\boldsymbol{b}_{f_i}=\textbf{0}$)} & {+0.41} \\
w/o Interaction Modulation ($w_{f_i,f_j} = 1$) & {+0.35}\\
{w/o Field-Aware Content Alignment (Global $\boldsymbol{W}_{Q,K,V}$)} &  {+0.32}\\
{w/o Field-Aware FFN (Global $\boldsymbol{W}_{1,2}$)} & {+0.35} \\
\midrule
{w/o Hypernetwork (Independent Matrices)} & {+0.46}\tnote{$\dagger$} \\
{w/o Field-Decomposed Attention (Full Pairwise Matrices)} & {N/A}\tnote{$\ddagger$} \\
\bottomrule
\end{tabular}
\begin{tablenotes}[flushleft]
\footnotesize
\item[$\dagger$] {$5\times$ more parameters.}
\item[$\ddagger$] {Out of memory (OOM), exceeding $150$B parameters.}
\end{tablenotes}
\end{threeparttable}
}
\vspace{-25pt}
\end{table}

\vspace{-1em}
\subsection{Ablation Study (RQ2)} 
\label{sec:ablation}

\subsubsection{\change{Impact of Field-Aware Components}}

To understand the sources of FAT’s gains, we conduct ablation studies measuring the \emph{relative AUC improvement} over the baseline in Table~\ref{tab:ablation_rq2}. All variants maintain similar training setups, enabling controlled comparison of design choices.
FAT-Large achieves the highest gain ({+0.47\%}), validating the effectiveness of our overall design. \change{We analyze the contribution of each module following the hierarchy in Table~\ref{tab:ablation_rq2}:}

\begin{itemize}[leftmargin=*,noitemsep]
     \item \textbf{Impact of Structural Priors (w/o Field-Aware Biases):} Removing field-aware positional biases results in a slight performance drop to {+0.41\%}. This indicates that while assigning static semantic roles to fields is beneficial, it serves primarily as a foundational prior rather than the main driver of expressivity.

    \item \textbf{Impact of Interaction Routing (w/o Interaction Modulation):} When we disable the field-pair scalars (setting $w_{f_i,f_j}=1$), the gain drops significantly to {+0.35\%}. This confirms that simply allowing all fields to interact is insufficient; the model requires the ability to explicitly up-weight meaningful pathways (e.g., User $\to$ Item) and suppress noise via asymmetric routing.

    \item \textbf{Impact of Content Alignment (w/o Field-Aware Content Alignment):} 
    \changenew{Replacing the field-specific content projections $\{\mathbf{W}_Q^{(f)}, \mathbf{W}_K^{(f)}, \mathbf{W}_V^{(f)}\}$ with global shared matrices causes the most drastic performance degradation}, dropping to {+0.32\%}, a \textbf{loss of 0.15 percentage points}. This identifies \emph{Field-Aware Content Alignment} as the most critical component, demonstrating that \textit{early specialization by field role} is a prerequisite for modeling deep  interactions.
    
    \item \textbf{\change{Impact of Feature Adaptation (w/o Field-Aware FFN):}} \change{Similarly, reverting the Field-Aware FFN to a standard shared FFN reduces the gain to \textbf{+0.35\%}. This highlights that modeling \emph{intra-field} statistical distributions is as important as modeling \emph{inter-field} interactions, ensuring that the distinct feature spaces of heterogeneous fields are respected.}
\end{itemize}

\subsubsection{\change{Analysis of Basis-Composed Hypernetwork}}

Beyond expressivity, we evaluate the Hypernetwork's role in decoupling model complexity from field cardinality. We first examine its parameter efficiency and necessity:
\begin{itemize}[leftmargin=*,noitemsep]
    \item \textbf{\change{Hypernetwork Efficiency:}} Replacing the Basis-Composed Hypernetwork with fully materialized parameter matrices yields a comparable gain of \textbf{+0.46\%}, but at the cost of $5\times$ more parameters. This proves that our hypernetwork effectively compresses the parameter space without compromising representational power.
    \item \textbf{\change{Necessity of Decomposition:}} Conversely, a naïve attempt to implement field-pair attention without decomposition results in an \textbf{Out-of-Memory (OOM)} error (>150B parameters), underscoring that FAT's structural decomposition is essential for feasibility in industrial systems.
\end{itemize}

\begin{figure}[t]
\centering
\begin{tikzpicture}
\begin{axis}[
font=\scriptsize,
label style={font=\scriptsize},
tick label style={font=\scriptsize},
legend style={font=\scriptsize},
    xlabel={$\boldsymbol{K}$},
    ylabel={$\mathbf{\Delta}\textbf{AUC (\%)}$},
    ylabel style = {yshift=-0.6cm},
    xlabel style = {yshift=0.26cm},
    legend pos=north east,
    legend cell align=left,
    grid=major,
    width=8.5cm,
    height=3.9cm,
    xtick={1,3,5,10},
    ytick={0.2,0.3,0.4, 0.5},
    mark size=1pt,
]

\addplot[color=blue3, mark=triangle*, thick] coordinates {
    (1, 0.447)
    (3, 0.516)
    (5, 0.355)
    (10, 0.275)
};
\addlegendentry{$M=128$}

\addplot[color=blue2, mark=square*, thick] coordinates {
    (1, 0.344)
    (3, 0.47)
    (5, 0.321)
    (10, 0.252)
};
\addlegendentry{$M=64$}
\addplot[color=blue1, mark=*, thick] coordinates {
    (1, 0.172)
    (3, 0.367)
    (5, 0.287)
    (10, 0.241)
};
\addlegendentry{$M=32$}

\end{axis}
\end{tikzpicture}
\vspace{-10pt}
    \caption{\change{Performance spectrum of Basis-Composed Hypernetwork: Trade-off between $M$ and $K$.}}
    \vspace{-5pt}
    \label{fig:km_analysis}
\end{figure}

\begin{figure}[t]
\centering
\begin{tikzpicture}
\begin{axis}[
    width=8.5cm,
    height=3.9cm,
    grid=major,
    ylabel style = {yshift=-0.6cm},
    xlabel style = {yshift=0.26cm},
    xlabel={\changenew{Training Day$(t)$}},
    ylabel={\changenew{$\Delta\text{AUC}(t)$ (\%)}},
    legend pos=south east,
    legend cell align=left,
    tick label style={font=\scriptsize},
    label style={font=\bfseries\scriptsize},
    legend style={font=\scriptsize},
    mark size=1pt,
]

\addplot[color=color3, smooth, thick] coordinates {
    (1,-0.124) (2,-0.009) (3,0.048) (4,0.081) (5,0.099) (6,0.110) (7,0.117)
    (8,0.122) (9,0.125) (10,0.127) (11,0.129) (12,0.131) (13,0.132) (14,0.133)
};
\addlegendentry{FAT}

\addplot[color=color2, smooth, thick] coordinates {
    (1,-0.162) (2,-0.112) (3,-0.069) (4,-0.032) (5,0.002) (6,0.029) (7,0.053)
    (8,0.069) (9,0.082) (10,0.093) (11,0.100) (12,0.107) (13,0.112) (14,0.115)
};
\addlegendentry{FAT w/o Hypernetwork}

\addplot[color=color1, smooth, thick] coordinates {
    (1,-0.153) (2,-0.102) (3,-0.058) (4,-0.022) (5,0.009) (6,0.036) (7,0.057)
    (8,0.073) (9,0.084) (10,0.093) (11,0.099) (12,0.104) (13,0.107) (14,0.109)
};
\addlegendentry{Baseline}
\end{axis}
\end{tikzpicture}
\vspace{-10pt}
\caption{Convergence analysis of \changenew{$\Delta \text{AUC}(t)$} over 14 days following the introduction of a new feature field.}
\vspace{-10pt}
\label{fig:cold_start}
\end{figure}


\noindent\textbf{Parameter Synthesis Mechanism ($\boldsymbol{M}$ vs. $\boldsymbol{K}$).}
To further understand the mechanism of parameter synthesis, we investigate the interplay between the number of bases ($M$) and the \changenew{{active basis count ($K$)}}. As visualized in Figure~\ref{fig:km_analysis}, we observe a distinct coupling relationship between model capacity and sparsity:

\begin{itemize} [left=0pt]
\item \textbf{Impact of Basis Capacity ($\boldsymbol{M}$):} \change{Increasing $M$ consistently yields performance gains (e.g., $M=128$ outperforms $M=32$). This confirms that a richer basis pool expands the representational manifold. However, we cap $M$ in practice to maintain a low memory footprint.}
\item \textbf{Trade-off in Active Basis Count ($\boldsymbol{K}$):} \change{$K$ does not follow a "larger is better" rule; instead, it acts as a lever between expressivity and sparsity. An extremely small $K$ restricts diversity, while an excessively large $K$ leads to over-smoothing.}
\item \textbf{Dynamic Equilibrium:} \change{We identify an inverse relationship between optimal $K$ and $M$. When the basis pool is constrained ($M=32$), a larger $K$ is required to compensate for limited capacity via combinatorial blending. Conversely, as $M$ increases, the optimal $K$ shifts towards $1$. This suggests FAT effectively balances memory and expressivity by trading a small computational cost for exponentially larger representational power.}
\end{itemize}

\smallskip

\noindent\textbf{Efficiency in Feature Evolution.}
\changenew{To validate "sustainable feature evolution" (Section~\ref{sec:hypernetwork}), we first pre-train Baseline, FAT (w/o Hypernetwork), and FAT on the same feature set until convergence. Then, we inject a new field $f_{new}$ and continue training over a 14-day period. \changenew{Here, $\Delta\text{AUC}(t)$ quantifies the AUC gain of each framework relative to its own converged performance before field injection, $\Delta \text{AUC}(t)=\text{AUC}_{\text{old}+f_{new}}(t)-\text{AUC}_{\text{old}}(0)$, where $t$ denotes the training days elapsed since the injection.} As shown in Figure~\ref{fig:cold_start}, {FAT} achieves rapid convergence ("warm start") by leveraging pre-trained shared bases; it only requires learning a lightweight routing signature for the new field. In contrast, {FAT (w/o Hypernetwork)} and Baseline suffer from the "cold start" problem, exhibiting similar sluggish trajectories as they must learn full-rank parameters from scratch. While FAT (w/o Hypernetwork) is expected to eventually match FAT (consistent with Table~\ref{tab:ablation_rq2}), FAT significantly reduces the time-to-value for new features, enabling agile industrial iteration.}

\vspace{-0.5em}
\subsection{Interpretability Analysis (RQ3)} 
\label{sec:interpretability}

To understand how FAT captures semantic interactions, we analyze the learned modulation weights $w_{f_i,f_j}$, which govern the intensity of information flow from field $f_j$ to $f_i$. These parameters are shared across tokens within the same field pair and reflect global interaction patterns. We visualize the average $w_{f_i,f_j}$ across all attention heads in Figure~\ref{fig:w_matrix}, revealing two key properties: \textbf{structured coherence} and \textbf{asymmetric influence}.

\begin{table}[b]
\centering
\vspace{-10pt}
\caption{Asymmetry in $w_{f_i,f_j}$: Item as query attends more selectively than user interest.}
\vspace{-10pt}
\label{tab:asymmetry}
\small
\resizebox{.9\columnwidth}{!}{
\begin{tabular}{llP{1.3cm}P{1.3cm}P{1.3cm}}
\toprule
 & & \multicolumn{3}{c}{\textbf{Key Field} ($f_j$)} \\
\cmidrule(lr){3-5}
& & item & \changenew{recent\_clicks}   & user\_profile \\
\midrule
\multirow{3}{*}{\textbf{Query Field} ($f_i$)} & \multicolumn{1}{|l|}{item}            & {-}     & \textbf{0.97}    & 0.16 \\
& \multicolumn{1}{|l|}{\changenew{recent\_clicks}}      & 0.23  & {-}      & 0.24 \\
& \multicolumn{1}{|l|}{user\_profile}      & 0.24 & 0.32 & {-} \\
\bottomrule
\end{tabular}
}
\end{table}

\noindent\textbf{Structured Coherence.} 
The weight matrix is sparse and exhibits clear block-wise structure. High values concentrate on semantically meaningful pairs:
\begin{itemize}[leftmargin=*,topsep=0pt,itemsep=-1pt]
    \item \changenew{Candidate item features, including category and shop-level attributes (e.g., {\tt item\_cate}, {\tt shop\_level}), exhibit strong links to real-time user signals (e.g., {\tt recent\_clicks}).}
    \item \changenew{User profile fields, including demographic attributes (e.g., \texttt{age}, \texttt{gender}), interact most strongly with long-term preference indicators (e.g., \texttt{fav\_brands}, \texttt{longterm\_clicks}).}
\end{itemize}
In contrast, cross-field interactions between unrelated modalities (e.g., \texttt{device\_type} $\to$ \texttt{income}) remain near zero. This pattern \change{aligns with domain intuition}: short-term intent drives item relevance, while static profiles shape stable preferences. FAT naturally learns this separation without explicit supervision, demonstrating that its structured expressivity focuses capacity on meaningful pathways.

\smallskip

\noindent\textbf{Asymmetric Influence.} 
\changenew{The interaction patterns exhibit strong directionality, reflecting a functional hierarchy between fields. As shown in Table~\ref{tab:asymmetry}, when a candidate \texttt{item} acts as the query, it attends intensely to \texttt{recent\_clicks} (0.97). This indicates that the item operates as an active retriever, aggressively synthesizing user intent from behavioral history to determine its own relevance.
In contrast, when a \texttt{recent\_clicks} feature serves as the query, its attention to the \texttt{item} is weaker and more diffuse (0.23). This suggests that historical behaviors act as passive supporting evidence---they provide context when queried but do not require the candidate item to define their own semantics. This asymmetry confirms that FAT correctly learns the causal structure of recommendation: items seek users, not the other way around.}



\begin{figure}[t]
    \centering
    \includegraphics[width=0.6\linewidth]{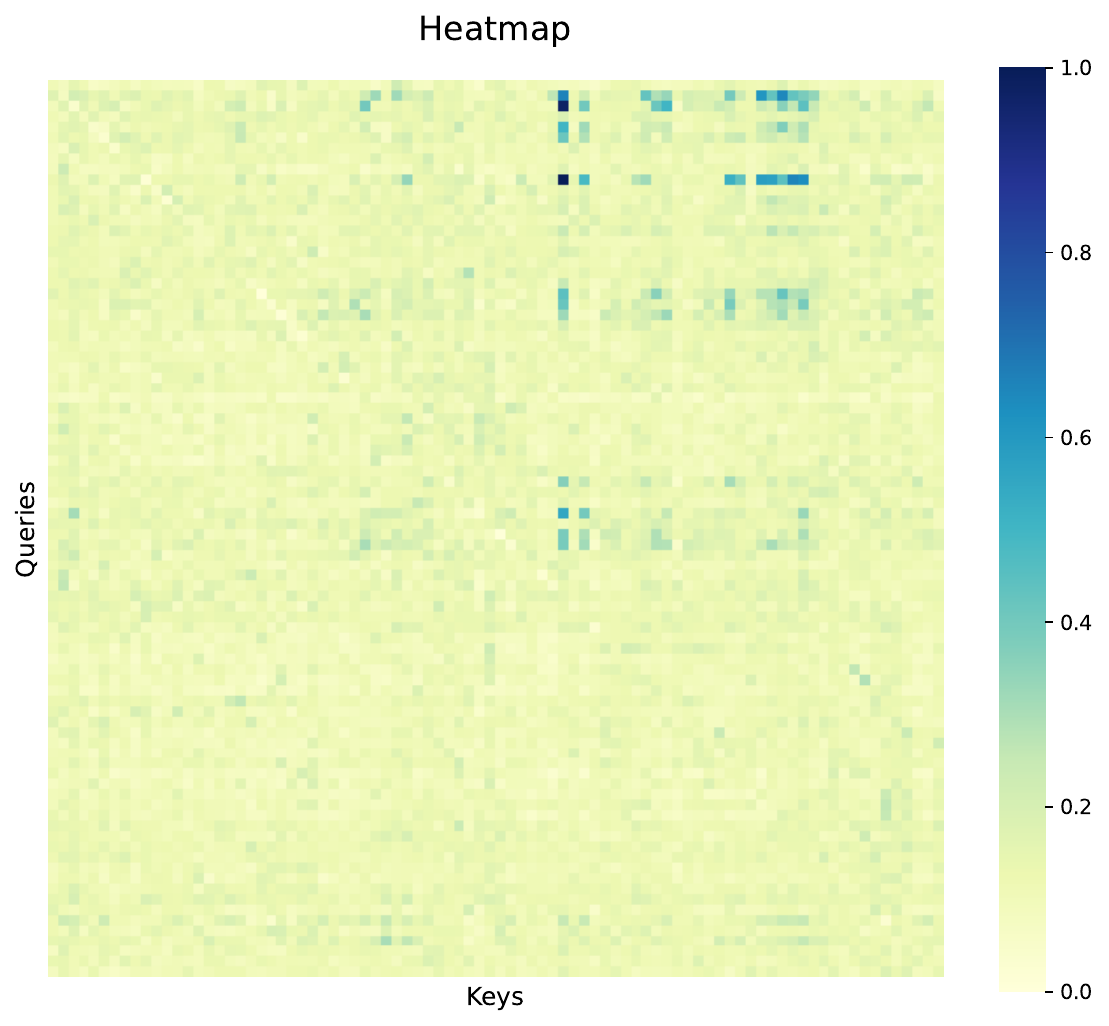}
    \vspace{-10pt}
    \caption{Heatmap of $w_{f_i,f_j}$. Strong interactions (dark) align with expected semantic dependencies.}
    \vspace{-5pt}
    \label{fig:w_matrix}
\end{figure}

\begin{figure}[t]
\centering
\begin{tikzpicture}
\begin{loglogaxis}[
font=\footnotesize,
label style={font=\footnotesize\bfseries},
tick label style={font=\footnotesize},
legend style={font=\footnotesize},
    legend cell align=left,
    width=8.5cm,
    height=4.8cm,
    grid=both,
    grid style={gray!25},
    minor grid style={gray!15},
    ylabel style = {yshift=-0.6cm},
    xlabel={Number of Dense Parameters},
    xlabel style = {yshift=0.2cm},
    ylabel={$\Delta$AUC (\%)},
    legend style={font=\scriptsize, draw=gray!50, fill=white, fill opacity=0.9, text opacity=1},
    legend pos=north west,
    xmin=4e7, xmax=1.6e9,
    ymin=1.3e-1, ymax=6.5e-1,
    ytick={0.2,0.3,0.4,0.6},
    yticklabels={0.2,0.3,0.4,0.6},
]

\addplot[
    only marks,
    mark=*,
    mark size=1pt,
] coordinates {
    (5.2e7, 0.16)
    (8.0e7, 0.17)
    (1.35e8, 0.21)
    (2.02e8, 0.24)
    (2.7e8, 0.31)
    (4.0e8, 0.34)
    (5.4e8, 0.47)
    (8.1e8, 0.46)
    (1.45e9, 0.58)
};
\addlegendentry{FAT (Empirical)}

\addplot[
    red,
    thick,
    dashed,
    domain=4.5e7:1.55e9,
    samples=200
] {1.16e-4 * x^(0.405)};
\addlegendentry{$\Delta\mathrm{AUC}=1.16\times 10^{-4}\cdot N_{\text{params}}^{0.405}$}
\end{loglogaxis}
\end{tikzpicture}
\vspace{-10pt}
\caption{\change{Power-law relationship between parameter count and AUC. Best-fit slope $\Delta \text{AUC} =1.16 \times 10^{-4} \cdot N_{\text{params}}^{0.405}$.}}
\vspace{-12pt}
\label{fig:scaling}
\end{figure}


\vspace{-1em}
\subsection{Scaling Behavior (RQ4)} 
\label{sec:scaling}

We examine whether FAT exhibits predictable scaling behavior, where performance improves systematically with capacity under fixed feature semantics, as suggested by its theoretical generalization bound (Theorem~\ref{thm:generalization}). 
To this end, we evaluate a series of FAT variants scaled from 50M to 1.5B dense (non-embedding) parameters, trained on the same dataset with consistent hyperparameters and infrastructure. As shown in Figure~\ref{fig:scaling}, $\Delta$AUC increases monotonically with parameter count across three orders of magnitude, following a power-law trend. The observed relationship is well-characterized by the empirical function:
\begin{equation}
    \text{$\Delta$AUC} = 1.16 \times 10^{-4} \cdot N_{\text{params}}^{0.405},
    \label{eq:scaling_law}
\end{equation}
 No saturation is observed within the tested range, indicating that FAT continues to benefit from increased capacity---an uncommon property in CTR models, where unstructured architectures often plateau or degrade under scale~\cite{DBLP:conf/icml/ZhangLCNLLZHYWP24,DBLP:conf/icml/ZhaiLLWLCGGGHLS24,DBLP:conf/cikm/ZhuFZJWHDWZGYCC25}.

This smooth scaling behavior aligns with Theorem~\ref{thm:generalization}, which establishes that FAT's complexity grows as $\mathcal{O}(Fd^2 + F^2)$, independent of token vocabulary size $n$. By constraining expressive capacity to field-aware interaction pathways, FAT ensures additional parameters refine meaningful patterns rather than overfitting. Importantly, we clarify that $F$ defines the interaction topology, not a scaling dimension. Unlike standard scaling, increasing $F$ fundamentally alters the input schema. Thus, the observed trend in Figure~\ref{fig:scaling}  reflects \textit{intra-schema scalability}: performance improves predictably as representational fidelity deepens within this fixed structural foundation.

Further ablation in Section \ref{sec:ablation} shows that removing field-aware components compromises this scalability, proving the behavior arises from architectural design.
\change{In summary, FAT provides clear empirical validation for a scaling law in CTR prediction.} This closes the loop between theory and practice, establishing a principled path toward scalable recommendation modeling.

\subsection{Online A/B Test Results (RQ5)} 
\label{sec:ab_test}
To evaluate the real-world impact of FAT, we conducted a large-scale online A/B test on Taobao’s sponsored search system, one of the largest e-commerce recommendation systems globally.
Traffic was evenly split between the \textit{control group}, serving the existing production model (a highly optimized traditional CTR model with manual feature crosses), and the \textit{treatment group}, which replaced only the prediction module with \textbf{FAT-Large} ($\sim$0.5B parameters). 
\change{We evaluate the model performance from two perspectives as reported in Table~\ref{tab:ab_result}: \textbf{(1) System Efficiency}: FAT maintains comparable inference speed to the online baseline: P99 latency increases only marginally (45ms to 48ms), despite a $5\times$ increase in \changenew{model FLOPs (Floating Point Operations)}. \changenew{We attribute this stability to our hardware-aligned model design and optimization strategies: Model FLOPs Utilization (MFU) surges from 5\% to 34\%.}
Without caching, P99 latency increases to 133ms, rendering the model undeployable. This confirms that FAT translates offline complexity into structured expressivity without prohibitive online penalties; \textbf{(2) Business Gain}: FAT achieves a \textbf{+2.33\%} lift in {CTR}, indicating superior user intent capture, and improves {RPM} (Revenue Per Mille impressions) by \textbf{+0.66\%}, translating to substantial daily revenue growth, thereby confirms FAT’s industrial cost-effectiveness.}


\begin{table}[t]
\centering
\caption{{Online A/B test results comparing system efficiency and business gain.}}
\vspace{-10pt}
\label{tab:ab_result}
\renewcommand{\arraystretch}{1.15}
\resizebox{.99\columnwidth}{!}{
\begin{threeparttable}
\begin{tabular}{lR{2cm}R{1.1cm}R{1.3cm}R{1cm}R{1cm}}
\toprule
\multirow{2}{*}{\textbf{Model Setting}} & \multicolumn{3}{c}{\textbf{System Efficiency}} & \multicolumn{2}{c}{\textbf{Business Gain}} \\
\cmidrule(lr){2-4} \cmidrule(lr){5-6}
 & \textbf{P99 Latency} & \textbf{GFLOPs} &\textbf{MFU} &  \textbf{CTR} & \textbf{RPM} \\
\midrule
{Online Baseline} & 45 ms &  20 &5\% & -- & -- \\
\midrule
{FAT-Large} & {48 ms} & {100} &\textbf{34\%} &  {\textbf{+2.33\%}} & {\textbf{+0.66\%}} \\
FAT-Large (\textit{Uncached} $\mathbf{W}_\Phi^{(f)}$) & 133 ms & 150 & {N/A}\tnote{$^\dagger$} &{N/A}\tnote{$^\dagger$} & {N/A}\tnote{$^\dagger$} \\
\bottomrule
\end{tabular}

\begin{tablenotes}[flushleft]
\footnotesize
\item[$\dagger$] {Service Unavailable: Severe timeouts prevented the model from serving live requests. Consequently, performance metrics (e.g., MFU) and downstream business indicators (e.g., CTR, RPM) become invalid due to the inability to sustain traffic.}
\end{tablenotes}
\end{threeparttable}
}
\vspace{-15pt}
\end{table}

\section{Conclusion}
\label{sec:conclusion}

\change{This work addresses the scalability bottleneck in CTR prediction by resolving the structural conflict between sequential Transformers and combinatorial data. 
We propose the \textbf{Field-Aware Transformer (FAT)}, which synergizes {field-aware} components to impose inductive biases aligned with heterogeneous fields, and employs a basis-composed hypernetwork to cut parameter complexity for efficient online deployment.
Theoretical analysis and empirical results converge to validate {a principled scaling law}: performance improves predictably with capacity under this structured design.
Validated in both offline benchmarks and live production, FAT confirms that breaking the curse of diminishing returns requires a paradigm shift---from parameter inflation to \textit{structured expressivity} rooted in domain semantics.
}

\clearpage
\newpage

\appendix

\section{APPENDIX: THEORETICAL ANALYSIS OF FAT}
\label{app:theory}

In this section, we provide a rigorous analysis of the generalization properties of the Field-Aware Transformer (FAT), based on Rademacher complexity~\cite{DBLP:books/daglib/0034861}. We derive an upper bound on the generalization error that explicitly depends on the number of semantic fields $F$, the embedding dimension $d$, and the interaction structure, while being independent of the total vocabulary size $n$. This justifies why FAT scales effectively even under extreme sparsity.

\subsection{Setup and Notation}
Let $\mathcal{X}$ denote the space of CTR inputs, each consisting of $F$ semantic fields $\{f_1, ..., f_F\}$. Consider one attention layer of FAT applied to embedded features $\{\boldsymbol{h}_i\}_{i=1}^N$, where $\boldsymbol{h}_i \in \mathbb{R}^d$ includes both content embedding and field-aware positional bias. The output for token $i$ is:
\begin{align*}
\text{Output}_i &= \sum_{j=1}^{N} \alpha(i, j)\boldsymbol{v}_j, \\
\quad \alpha(i, j) &= \frac{\exp\left(s(i, j)/\sqrt{d}\right)}{\sum_{k=1}^{N} \exp \left( s(i, k)/\sqrt{d} \right)}, \quad s(i, j)=(\boldsymbol{q}_i^\top \boldsymbol{k}_j) \cdot w_{f_i,f_j},
\end{align*}
where:
\begin{itemize}[left=0pt]
\item $\boldsymbol{q}_i= \boldsymbol{h}_i\boldsymbol{W}^{(f_i)}_Q $, $\boldsymbol{k}_j= \boldsymbol{h}_j\boldsymbol{W}^{(f_j)}_K $, $\boldsymbol{v}_j= \boldsymbol{h}_j\boldsymbol{W}^{(f_j)}_V $;
    \item $\boldsymbol{W}^{(f)}_Q \in \mathbb{R}^{d\times d}$, similarly for $\boldsymbol{W}^{(f)}_K ,\boldsymbol{W}^{(f)}_V$;
    \item $w_{f_i,f_j} \in \mathbb{R}$ is a learnable scalar coefficient.
\end{itemize}
Let $\mathcal{F}_{\text{FAT}}$ denote the function class induced by this attention block, mapping input sequences $\boldsymbol{H}$ to contextualized outputs. Our goal is to bound its Rademacher complexity.

\subsection{Assumptions}
We make the following boundedness assumptions, common in generalization analysis:
\begin{assumption}[Bounded Embeddings]
$\|\boldsymbol{h}_i \|_2 \leq R$ for all $i$, some constant $R> 0$.
\end{assumption}
\begin{assumption}[Bounded Parameters]
For all fields $f$:
\begin{itemize}[left=0pt]
	\item $\|\boldsymbol{W}^{(f)}_Q \|_F \leq B_Q$,
    \item $\|\boldsymbol{W}^{(f)}_K \|_F \leq B_K$,
    \item $\|\boldsymbol{W}^{(f)}_V \|_F \leq B_V$;
\end{itemize}
and for all field pairs $(f_i, f_j)$: $|w_{f_i,f_j}| \leq B_w$.
These reflect practical regularization techniques (e.g., weight decay, gradient clipping).
\end{assumption}

\subsection{Main Theorem and Proof}
Our main theoretical result is stated as follows:

\begin{theorem}[Generalization Bound for FAT]
\label{thm:fat_app}
Under Assumptions 1--2, the empirical Rademacher complexity of $\mathcal{F}_{\text{FAT}}$ satisfies:
\[
\hat{\mathfrak{R}}_S(\mathcal{F}_{\text{FAT}}) \leq \mathcal{O}\left( \frac{ \sqrt{F d^2 + F^2} }{\sqrt{m}} \cdot C(R, B_Q, B_K, B_V, B_w, d) \right),
\]
where $C(R, B_Q, B_K, B_V, B_w, d) = \mathcal{O}(R^2 B_Q B_K B_w \sqrt{d} + R B_V B_w)$ is a constant depending on the norm bounds and dimension. Consequently, with probability at least $1-\delta$, the generalization error satisfies:
\[
L_{\text{gen}} \leq L_{\text{train}} + \mathcal{O}\left( \frac{ \sqrt{F d^2 + F^2} }{\sqrt{m}} \cdot C + \sqrt{\frac{\log(1/\delta)}{m}} \right).
\]
\end{theorem}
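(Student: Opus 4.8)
The plan is to bound the empirical Rademacher complexity $\hat{\mathfrak{R}}_S(\mathcal{F}_{\text{FAT}})$ by a covering-number argument over the \emph{parameter space} and then convert to a generalization bound via the standard symmetrization theorem. The decisive structural fact is that a single FAT layer is parameterized only by the $3F$ matrices $\mathbf{W}^{(f)}_{Q},\mathbf{W}^{(f)}_{K},\mathbf{W}^{(f)}_{V}\in\mathbb{R}^{d\times d}$ and the $F^2$ scalars $w_{f_i,f_j}$, so the effective parameter dimension is $D=3Fd^2+F^2=\mathcal{O}(Fd^2+F^2)$, independent of the vocabulary size $n$. This is exactly what surfaces as the $\sqrt{Fd^2+F^2}$ factor, and it is the formal counterpart of the architectural claim that complexity scales with $F$ rather than $n$.

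I would first record uniform bounds on all intermediate quantities. Under Assumptions 1--2, $\|\mathbf{q}_i\|_2\le B_Q R$, $\|\mathbf{k}_j\|_2\le B_K R$, $\|\mathbf{v}_j\|_2\le B_V R$, hence $|s(i,j)|\le B_Q B_K B_w R^2$ and, since the attention weights form a convex combination, $\|\text{Output}_i\|_2\le B_V R$. The core step is then a Lipschitz analysis of $\text{Output}_i$ with respect to the parameters, decomposed along two paths: the \emph{value path} through $\mathbf{W}^{(f)}_V$, which is linear and contributes a constant of order $R$ weighted by attention mass; and the \emph{attention-weight path} through $\mathbf{W}^{(f)}_Q,\mathbf{W}^{(f)}_K,w_{f_i,f_j}$, which passes through the softmax. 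Exploiting that the softmax map has bounded Jacobian and that $s(i,j)$ is bilinear in $(\mathbf{q}_i,\mathbf{k}_j)$ scaled by $w_{f_i,f_j}$, one propagates perturbations to an overall Lipschitz constant $L_\theta=\mathcal{O}(R^2 B_Q B_K B_w\sqrt{d}+R B_V B_w)=\mathcal{O}(C)$, where the explicit $\sqrt{d}$ tracks conversions between Frobenius and operator norms together with the dimensionality of the value aggregation. A key point is that each token draws its projection from a single field matrix, so perturbation energy is shared across the $F$ field groups and all constants remain free of the sequence length $N$ and of $n$.

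With $L_\theta$ fixed, I would cover the bounded parameter set---a product of Frobenius balls of radii $B_Q,B_K,B_V$ and an $\ell_\infty$ ball of radius $B_w$---to accuracy $\epsilon/L_\theta$, obtaining $\log\mathcal{N}(\epsilon)=\mathcal{O}(D\log(L_\theta\,\mathrm{rad}/\epsilon))$ with $D=\mathcal{O}(Fd^2+F^2)$. Dudley's entropy integral then gives $\hat{\mathfrak{R}}_S(\mathcal{F}_{\text{FAT}})\le\mathcal{O}(\sqrt{D}\,L_\theta/\sqrt{m})=\mathcal{O}(\sqrt{Fd^2+F^2}\,C/\sqrt{m})$, the claimed complexity bound. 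Assuming a bounded, Lipschitz surrogate loss (e.g.\ the logistic loss), I would apply Talagrand's contraction lemma to pass from $\mathcal{F}_{\text{FAT}}$ to the loss class and invoke the standard uniform-convergence theorem of Mohri et al., adding the $\sqrt{\log(1/\delta)/m}$ deviation term to reach the stated high-probability bound.

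The main obstacle is the Lipschitz analysis of the normalized attention through the bilinear, $w$-modulated score: one must simultaneously track how perturbations in $\mathbf{W}_Q,\mathbf{W}_K$ and in $w_{f_i,f_j}$ reshape the softmax weights and how those reweight the (themselves parameter-dependent) value vectors, while keeping every constant independent of $N$ and $n$ and correctly exposing the $\sqrt{d}$ dependence. The vector-valued nature of $\text{Output}_i$ adds a further wrinkle, which I would resolve with a Maurer-type vector-contraction inequality before the final scalar readout.
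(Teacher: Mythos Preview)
Your proposal is correct and takes a genuinely different route from the paper. The paper bounds the Rademacher complexity of the \emph{score function class} $\mathcal{S}=\{(i,j)\mapsto (\mathbf{W}^{(f_i)}_Q\mathbf{h}_i)^\top(\mathbf{W}^{(f_j)}_K\mathbf{h}_j)\cdot w_{f_i,f_j}\}$ directly, splitting the contribution into a $\sqrt{Fd^2}$ piece from the field-specific projection matrices and a $\sqrt{F^2}$ piece from the modulation scalars via ``standard results for linear function classes under norm constraints,'' and then propagates through softmax using Talagrand's contraction lemma (invoking the $C_s$-Lipschitz property of softmax) to reach the output class. You instead work in \emph{parameter space}: establish a single Lipschitz constant $L_\theta$ for the full output map with respect to the concatenated parameter vector, cover the product of norm balls at scale $\epsilon/L_\theta$, and run Dudley's entropy integral, letting the ambient dimension $D=3Fd^2+F^2$ produce the $\sqrt{Fd^2+F^2}$ factor directly. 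Your route is arguably more robust here, since the score function is bilinear in $(\mathbf{W}_Q,\mathbf{W}_K)$ and multiplicatively modulated by $w_{f_i,f_j}$, so the paper's additive decomposition of $\hat{\mathfrak{R}}_S(\mathcal{S})$ requires some care to make rigorous; the covering-number argument sidesteps this by treating the whole map as a black box once Lipschitzness is established. The cost is that Dudley typically introduces a logarithmic factor that the paper's direct bound (as stated) avoids, and your $\sqrt{D}$ is a dimension-counting bound rather than a purely norm-based one, so it does not automatically tighten if the learned matrices happen to be low-rank. One minor point: you attribute the $\sqrt{d}$ in $C$ to Frobenius/operator norm conversions and value aggregation, whereas the paper attributes it to the $1/\sqrt{d}$ temperature scaling in the softmax; either explanation is heuristic, and pinning down the exact dependence on $d$ is indeed the delicate step you flag at the end.
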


\begin{proof}
The proof proceeds in three steps, leveraging standard tools from learning theory.

\textbf{Step 1: Bounding Score Magnitude and Lipschitz Constants.}
Consider the score $s(i, j)=(\boldsymbol{q}_i^\top \boldsymbol{k}_j) \cdot w_{f_i,f_j}$. By Assumptions 1 and 2, we have:
\[
\|\boldsymbol{q}_i \|_2 = \|\boldsymbol{h}_i \boldsymbol{W}^{(f_i)}_Q \|_2 \leq \|\boldsymbol{h}_i \|_2 \|\boldsymbol{W}^{(f_i)}_Q \|_F  \leq B_Q R,
\]
similarly $\|\boldsymbol{k}_j \|_2 \leq B_K R$. Therefore,
\[
|\boldsymbol{q}_i^\top \boldsymbol{k}_j| \leq \|\boldsymbol{q}_i \|_2 \|\boldsymbol{k}_j \|_2 \leq B_Q B_K R^2.
\]
It follows that $|s(i, j)| \leq B_Q B_K R^2 B_w \triangleq C_s$. The softmax function $\sigma(\cdot)$, when viewed as a map from logits in $\mathbb{R}^N$ to probabilities in $[0,1]^N$, is $C_s$-Lipschitz with respect to the $\ell_\infty$ norm on the input and $\ell_1$ norm on the output, provided the logit differences are bounded by $C_s$~\cite{DBLP:books/daglib/0034861}.

\textbf{Step 2: Complexity of the Score Function Class.}
Define the class of score functions:
\[
\mathcal{S} = \left\{ (i, j) \mapsto (\boldsymbol{h}_i \boldsymbol{W}^{(f_i)}_Q )^\top (\boldsymbol{h}_j \boldsymbol{W}^{(f_j)}_K ) \cdot w_{f_i,f_j} \right\}.
\]
We analyze the Rademacher complexity of $\mathcal{S}$. The key insight is that the parameters defining $\mathcal{S}$ can be grouped into two types: (a) the $3F$ matrices ($\boldsymbol{W}^{(f)}_Q, \boldsymbol{W}^{(f)}_K, \boldsymbol{W}^{(f)}_V$) with Frobenius norm constraints, and (b) the $F^2$ scalars $w_{f_i,f_j}$ with absolute value constraints.

Using the fact that the Rademacher complexity of a sum of function classes is bounded by the sum of their complexities, and applying standard results for linear function classes under norm constraints~\cite[Chapter 3]{DBLP:books/daglib/0034861}, we get:
\begin{align*}
\hat{\mathfrak{R}}_S(\mathcal{S}) \leq \underbrace{\mathcal{O}\left( \frac{B_Q B_K R^2 \sqrt{F d^2}}{\sqrt{m}} \right)}_{\text{from } \boldsymbol{W}^{(f)}_Q, \boldsymbol{W}^{(f)}_K} + \underbrace{\mathcal{O}\left( \frac{B_Q B_K R^2 B_w \sqrt{F^2}}{\sqrt{m}} \right)}_{\text{from } w_{f_i,f_j}} \\
= \mathcal{O}\left( \frac{C_s \sqrt{F d^2 + F^2}}{\sqrt{m}} \right).
\end{align*}

\textbf{Step 3: Propagating Complexity to the Output.}
The final output is a weighted sum of the values $\boldsymbol{v}_j = \boldsymbol{h}_j \boldsymbol{W}^{(f_j)}_V $. Each value vector satisfies $\|\boldsymbol{v}_j \|_2 \leq B_V R$.

The attention weights $\alpha(i,j)$ are a function of the scores $\{s(i,k)\}_{k=1}^N$. Since the softmax is $C_s$-Lipschitz, we can apply Talagrand's contraction lemma~\cite[Theorem 4.12]{DBLP:books/daglib/0034861} to relate the Rademacher complexity of the attention weights to that of the scores. Furthermore, because the output is a linear combination of the values weighted by these attention weights, and given the boundedness of the values, we can combine these results.

After applying the (vector) contraction principle to propagate complexity through the softmax and the subsequent weighted sum, we obtain an additional dimension-dependent factor when controlling the $\ell_2$-norm of the $d$-dimensional outputs; together with the $1/\sqrt{d}$ temperature scaling (absorbed into constants), this leads to the stated $C(\cdot)$ term.

After careful application of the contraction lemma and accounting for the $\sqrt{d}$ factor introduced by the temperature scaling in the softmax (which affects the Lipschitz constant), we arrive at the final bound:
\[
\hat{\mathfrak{R}}_S(\mathcal{F}_{\text{FAT}}) \leq \mathcal{O}\left( \frac{ \sqrt{F d^2 + F^2} }{\sqrt{m}} \cdot (R^2 B_Q B_K B_w \sqrt{d} + R B_V B_w) \right).
\]
Applying the standard generalization bound via Rademacher complexity~\cite[Theorem 3.3]{DBLP:books/daglib/0034861} completes the proof.
\end{proof}

\subsection{Discussion: From Tight Generalization to Predictable Scaling}
\label{sec:app_scaling}

The derived generalization bound reveals that the generalization gap of FAT scales as $\tilde{\mathcal{O}}(\sqrt{F d^2 + F^2}/\sqrt{m})$, which crucially depends only on the number of semantic fields $F$ and the embedding dimension $d$, not on the total vocabulary size $n$. This structural alignment dramatically improves statistical efficiency compared to standard Transformers, whose effective hypothesis space complexity implicitly scales with $n$ due to unstructured attention over all tokens.

However, a tight generalization bound alone does not guarantee a smooth \textit{scaling law}; it ensures that the model won't overfit, but not that it will continue to improve. For predictable performance gains as capacity increases, two conditions must hold simultaneously:
\begin{enumerate}
    \item \textbf{Stable Generalization (Low Variance):} The generalization gap must remain controlled. This is guaranteed by Theorem~\ref{thm:fat_app}.
    \item \textbf{Enhanced Expressivity (Reduced Bias):} The model's representational capacity must grow in a way that allows it to capture increasingly complex patterns in the data, thereby reducing the training error $L_{\text{train}}$.
\end{enumerate}

FAT satisfies the second condition through its \textit{structured expressivity}. As the embedding dimension $d$ increases:
\begin{itemize}[left=0pt]
\item The field-aware projections $\boldsymbol{W}^{(f)}_Q, \boldsymbol{W}^{(f)}_K, \boldsymbol{W}^{(f)}_V$ gain higher rank, enabling more nuanced, field-specific encoding of content.
    \item The interaction modulation factors $w_{f_i,f_j}$, though scalars, operate on richer, higher-dimensional representations, allowing for more sophisticated modeling of cross-field dependencies.
\end{itemize}
This architecture ensures that additional parameters are used to refine meaningful, semantically grounded interactions rather than fitting noise. Empirically, this manifests as a systematic reduction in $L_{\text{train}}$ as $d$ grows.

Combining these two effects---controlled variance from the structural prior and reduced bias from enhanced expressivity---we achieve a synergistic improvement in test performance $L_{\text{gen}}$. Under a fixed data distribution and field schema $F$, if we scale the model width such that $N_{\text{params}} \propto F d^2$, then the generalization gap shrinks as $\mathcal{O}(\sqrt{N_{\text{params}}}/\sqrt{m})$. When $m$ grows proportionally to $N_{\text{params}}$, this gap remains stable or decreases, while $L_{\text{train}}$ continues to decrease. 
This synergy enables the observed power-law scaling behavior. 
Let $\Delta \text{AUC}$ be the gain over a fixed baseline. Our experiments (Section~\ref{sec:scaling}) validate that:
\begin{equation}
\Delta \text{AUC} \propto N_{\text{params}}^{\beta}, \quad \beta > 0,
\end{equation}
holds across multiple orders of magnitude. This is the first theoretically grounded scaling law for CTR models, demonstrating that predictable scaling arises from principled architectural design that aligns with the data's combinatorial semantics.

\end{document}